  \providecommand\BibTeX{{%
    \normalfont B\kern-0.5em{\scshape i\kern-0.25em b}\kern-0.8em\TeX}}}
\begin{document}

%%
%% The "title" command has an optional parameter,
%% allowing the author to define a "short title" to be used in page headers.
\title{DualCF: Efficient Model Extraction Attack from Counterfactual Explanations}

\author{Yongjie Wang}
\affiliation{%
  \institution{Nanyang Technological University}
  \country{Singapore}}
\email{yongjie002@e.ntu.edu.sg}

\author{Hangwei Qian}
\affiliation{%
  \institution{Nanyang Technological University}
  \country{Singapore}}
\email{QIAN0045@e.ntu.edu.sg}

\author{Chunyan Miao}
\affiliation{%
  \institution{Nanyang Technological University}
  \country{Singapore}}
  \email{ascymiao@ntu.edu.sg}

\renewcommand{\shortauthors}{Yongjie Wang, Hangwei Qian, and Chunyan Miao}

\begin{abstract}

Cloud service providers have launched Machine-Learning-as-a-Service (MLaaS) platforms to allow users to access large-scale cloud-based models via APIs. In addition to prediction outputs, these APIs can also provide other information in a more human-understandable way, such as counterfactual explanations (CF). However, such extra information inevitably causes the cloud models to be more vulnerable to extraction attacks which aim to steal the internal functionality of models in the cloud. Due to the black-box nature of cloud models, however, a vast number of queries are inevitably required by existing attack strategies before the substitute model achieves high fidelity. In this paper, we propose a novel simple yet efficient querying strategy to greatly enhance the querying efficiency to steal a classification model. This is motivated by our observation that current querying strategies suffer from decision boundary shift issue induced by taking far-distant queries and close-to-boundary CFs into substitute model training. We then propose DualCF strategy to circumvent the above issues, which is achieved by taking not only CF but also counterfactual explanation of CF (CCF) as pairs of training samples for the substitute model. Extensive and comprehensive experimental evaluations are conducted on both synthetic and real-world datasets. The experimental results favorably illustrate that DualCF can produce a high-fidelity model with fewer queries efficiently and effectively.      

\end{abstract}

%%
%% The code below is generated by the tool at http://dl.acm.org/ccs.cfm.
%% Please copy and paste the code instead of the example below.
%%
\begin{CCSXML}
<ccs2012>
   <concept>
       <concept_id>10002978.10003022</concept_id>
       <concept_desc>Security and privacy~Software and application security</concept_desc>
       <concept_significance>500</concept_significance>
       </concept>
   <concept>
       <concept_id>10010147.10010178</concept_id>
       <concept_desc>Computing methodologies~Artificial intelligence</concept_desc>
       <concept_significance>500</concept_significance>
       </concept>
   <concept>
       <concept_id>10010147.10010257.10010293.10010294</concept_id>
       <concept_desc>Computing methodologies~Neural networks</concept_desc>
       <concept_significance>500</concept_significance>
       </concept>
   <concept>
       <concept_id>10010147.10010257.10010321</concept_id>
       <concept_desc>Computing methodologies~Machine learning algorithms</concept_desc>
       <concept_significance>500</concept_significance>
       </concept>
    
   <concept>
       <concept_id>10010147.10010178.10010187.10010198</concept_id>
       <concept_desc>Computing methodologies~Reasoning about belief and knowledge</concept_desc>
       <concept_significance>500</concept_significance>
       </concept>
 </ccs2012>
\end{CCSXML}

\ccsdesc[500]{Security and privacy~Software and application security}
\ccsdesc[500]{Computing methodologies~Artificial intelligence}
\ccsdesc[500]{Computing methodologies~Neural networks}
\ccsdesc[500]{Computing methodologies~Machine learning algorithms}
\ccsdesc[500]{Computing methodologies~Reasoning about belief and knowledge}

%%
%% Keywords. The author(s) should pick words that accurately describe
%% the work being presented. Separate the keywords with commas.
\keywords{Counterfactual Explanations, Model Extraction Attack, Decision Boundary Shift, Model Security and Privacy}

\maketitle

\section{Introduction}

Abundant machine learning models are deployed with automated decision-making ability in various fields such as computer vision~\cite{He_2016_CVPR}, medical diagnosis~\cite{jumper2021highly}, recommender systems~\cite{lu2015recommender}, language translation~\cite{vaswani2017attention}, healthcare~\cite{qian2019novel} and finance~\cite{chen2020deep}. Due to the model/data privacy and computational capacity, the trained models are usually deployed in the cloud via MLaaS platforms, with only public Application Programming Interfaces (APIs) for remote access on a pay-per-query basis. Inevitably, there exists a tension between public accessibility and model confidentiality. On the one hand, the open APIs should be publicly accessible everywhere and anytime. On the other hand, both the datasets and models are intellectual properties of the owners and should be kept private and confidential since (1) model training requires expensive cost on human power, data collection and computation resource; (2) individual's privacy rights should be protected from potential reveals and attacks. Even so, the great commercial value of cloud model steers adversaries to conduct model extraction attack, i.e., to steal the internal functionality of cloud model to construct a substitute model without expensive cost, which facilitates further data tampering to bypass monitoring \cite{xu2016automatically} and stronger attacks, e.g., adversarial attack \cite{papernot2017practical}, model inversion attack \cite{fredrikson2015model} and membership inference attack~\cite{shokri2017membership}.

\begin{figure*}[tb]
    \centering
    \includegraphics[width = 0.8\linewidth]{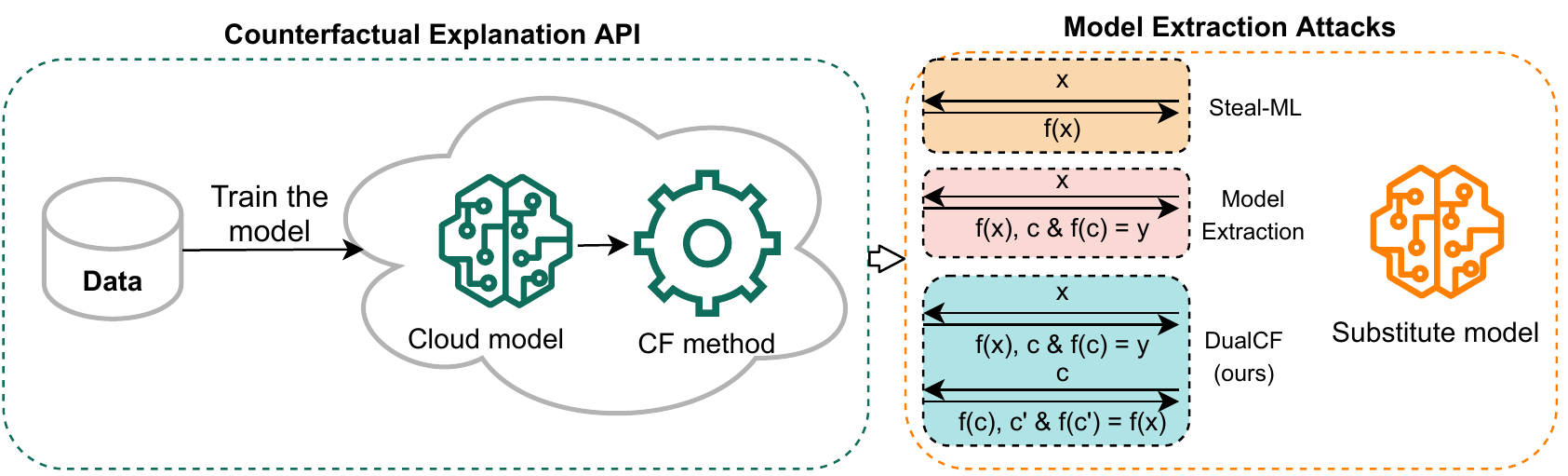}
    \caption{The workflow of model extraction attacks with counterfactual explanations. Service providers train the cloud model on their private data and deploy the counterfactual explanation service in the cloud. Adversaries aim to construct a substitute model by leveraging the information provided by APIs. Three types of model extraction attacks are illustrated.  \textit{Steal-ML}~\cite{tramer2016stealing} only uses the prediction output of the query $\boldsymbol{x}$. \textit{Model Extraction}~\cite{aivodji2020model} considers both the query $\boldsymbol{x}$ and CF $\boldsymbol{c}$ while our proposed DualCF takes CF  $\boldsymbol{c}$ and CCF $\boldsymbol{c}'$ into training. In Model Extraction Attacks module, ``$\leftarrow$'' denotes uploading a query to the API and ``$\rightarrow$'' denotes receiving outcomes from the API.}
    \label{fig:workflow}
\end{figure*}

Model extraction attack aims to obtain a functionally equivalent or near-equivalent machine learning model, which achieves high agreement (up to 100\%) with the cloud model with as fewer queries as possible. We consider real-life scenarios for model extraction attack where service providers (e.g., banks, health centers) deploy machine learning services in the cloud and allow remote access via open APIs. Users can query the API with an input to obtain the corresponding prediction. We assume that (1) the training details, architectures and parameters of the cloud model are invisible to users; (2) users can obtain the data format, the number of classes from the public profile; and (3) users can collect a set of samples to query multiple times. As the API is often on pay-per-query basis and the service provider monitors the abnormal data traffic, fewer queries are preferred. 

Among all these attacks, we focus on the model extraction attack with counterfactual explanations (CF). Counterfactual explanations answer ``what minimum changes are needed for an input instance to alter the current prediction to a particular different one''~\cite{s2017counterfactual}. Especially, for a given input and a pretrained model, counterfactual explanation methods find explanations that have minimal cost to convert the current prediction to a different particular prediction, usually from an undesirable prediction to a desirable one, subject to specified constraints. Note that our proposed approach also uses counterfactual explanation methods to flip desired predictions to undesired ones. This is uncommon for general propose, but such counterfactual explanation methods are capable of flipping predictions of any input. Counterfactual explanations from a desirable prediction to undesirable one reveal the corresponding actions the subject should avoid to prevent the situation from turning worse. As counterfactual explanations offer suggestions with minimum cost to flip the current prediction, therefore, they have broad applications in healthcare (altering an unhealthy situation to a health one), finance (improving the loan approval rate), school admission (obtaining a school offer), paper review (minor revision for paper acceptance), and custom service recovery (regaining the loyalty of unhappy customers). To better illustrate how counterfactual explanations work, we take the loan application as an example: an applicant seeks a mortgage from a loan-granting bank. The applicant submits his/her related information (including age, education, salary, credit score) to the bank. The bank deploys a machine learning model with the binary classifier and then denies this loan application due to the submitted attributes of low salary and poor credit score. Naturally, the individual seeks to know the reason behind the rejection and further to know the necessary changes required before the loan can be approved. The decision-making system equipped with counterfactual explanations is able to provide constructive suggestions in a human-understandable way such as ``increasing the salary by $500$ and increasing the credit score by $100$'' to the applicant, showing the minimal improvement required before the application can be approved. Hence, different from original cloud models merely making opaque algorithmic decisions, cloud models equipped with CF provide additional information to better reflect the underlying key factors that explain the outcome of cloud models. Both attacks are illustrated in Fig.~\ref{fig:workflow}.

While much research \cite{mothilal2020explaining,ustun2019actionable,dhurandhar2018explanations,karimi2020algorithmic} studies how to improve the explanations to meet user requirements, few works study the security and privacy issues of counterfactual explanations, which are essential for safety-critical applications. Despite the intuition that such additional information would naturally enhance the risk of model leakage, up to now, researchers have not reached a consensus on the security and privacy issues of model extraction attacks with CF. Some works~\cite{s2017counterfactual,hashemi2020permuteattack} claim that CF allows individuals to receive explanations without conveying the internal logic of the algorithmic black box since it only conveys a limited set of dependencies on an input, while other researchers~\cite{sokol2019counterfactual,aivodji2020model} do not agree with the above viewpoint. Recently,~\cite{aivodji2020model} experimentally verifies that adversaries can extract a high-fidelity model with counterfactual explanations leaking the information of decision boundary of cloud model. Nevertheless, due to the black-box property of the cloud model, adversaries need to continuously query the API to collect sufficient information before the substitute model can mimic the cloud model with a high agreement. With this in mind, we seek to propose a simple but efficient querying strategy, named \textit{DualCF}, which greatly reduces the number of required queries, and further lowers the overall cost.

\begin{figure*}
    \centering
    \includegraphics[width = \linewidth]{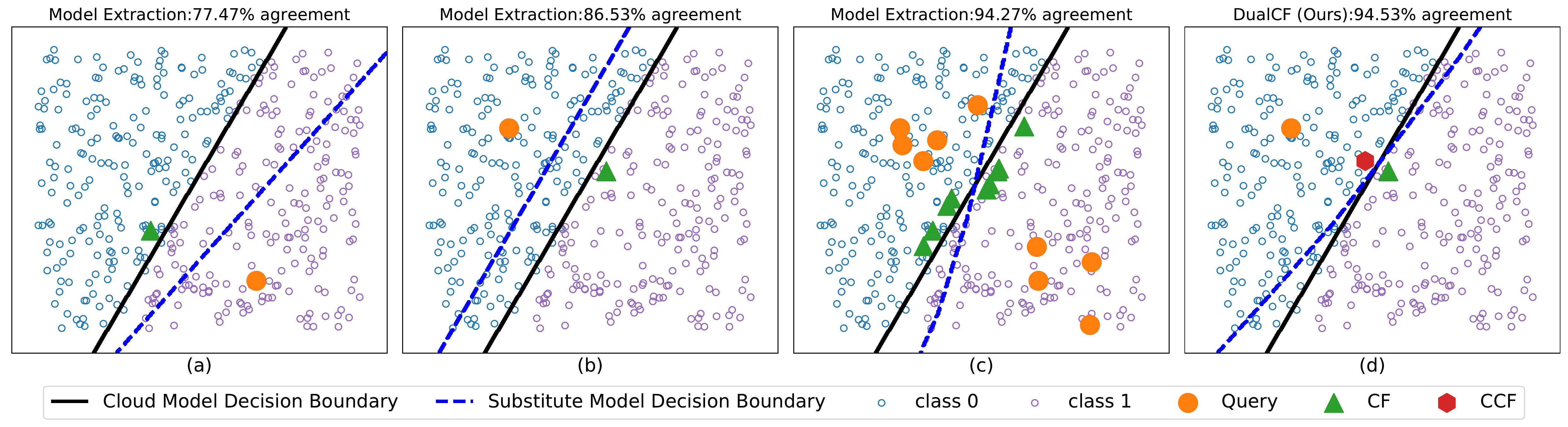}
    \caption{The illustration of decision boundary shift issue. (a) and (b) demonstrate that substitute model decision boundary shifts away from the ground truth due to far-distant queries in existing method \textit{Model Extraction}~\cite{aivodji2020model}. (c) shows that \textit{Model Extraction} has to use more queries to mitigate this issue. Our method~\textit{DualCF} can achieve comparable agreement with only one pair CF and CCF, as shown in (d), which favorably illustrates the efficacy of our method. }
    \label{fig:shift}
\end{figure*}

Our querying strategy~\textit{DualCF} is motivated from the \textit{decision boundary shift issue} in existing \textit{Model Extraction} method~\cite{aivodji2020model}. This issue refers to the decision boundary of the substitute model shifting away from the decision boundary of the cloud model when \textit{Model Extraction} method takes CFs and queries into substitute model training. As it is challenging and almost impossible to obtain the ground-truth decision boundary and training data distribution of the cloud model in advance, the random queries may be far from the decision boundary of cloud model. However, counterfactual explanations are close to the decision boundary regardless of whatever probability threshold of target class is selected for the stop condition. This is because the cloud model tends to assign higher probabilities to all training samples to minimize the training loss. As queries and counterfactual explanations have different predictions, substitute model tends to move towards the middle area between them to separate them confidently. Note that, however, the decision boundary of cloud model is actually close to counterfactual explanations but far from queries. We illustrate this issue in Fig.~\ref{fig:shift}(a) in two-dimensional data space. As queries are randomly selected at each time, the decision boundary of substitute model may shift to different locations, as shown in Fig. ~\ref{fig:shift}(a) and Fig.~\ref{fig:shift}(b). To mitigate this issue,~\textit{Model Extraction} method has to require more queries at a higher attack cost and hence lower efficiency, as shown in Fig.~\ref{fig:shift}(c).

To reduce the number of queries resulting from the decision boundary shift issue, we propose a simple yet efficient method \textit{DualCF}. The gist of our method is to find proper queries with similar distances to decision boundary of cloud model. Note that explicitly selecting the close-to-boundary queries is quite difficult without decision boundary information of cloud model beforehand. To achieve so, our method feeds the current counterfactual explanation (CF) of a query into the open API, and obtains the counterfactual explanation of CF (named CCF for abbreviation). The workflow differences between our DualCF and existing methods (\textit{Steal-ML}~\cite{tramer2016stealing} and \textit{Model Extraction}~\cite{aivodji2020model}) are shown in Fig.~\ref{fig:workflow}. Note that CF and CCF are counterfactual explanations of query and CF respectively, and they tend to locate in the region close to the decision boundary of cloud model but with different predictions. Lastly, we train the substitute model using pairs of CF and CCF. As shown in Fig.~\ref{fig:shift}(d), DualCF with only a pair of CF and CCF achieves comparable agreements as \textit{Model Extraction}. Due to the fact that CF and CCF have the similar distances to the decision boundary of the cloud model and have opposite predictions, the decision boundary shift issue in substitute model training is mitigated. CF and CCF work similarly to support vectors in SVM~\cite{cortes1995support} to help infer the decision boundary. For any query, CF and CCF are always located in the close-to-boundary region, a subset of full data space. Therefore, our method is less sensitive to sampling procedure of queries from full data space.  

Our proposed DualCF greatly reduces the queries theoretically and experimentally for model extraction attack, which consolidates the viewpoint that counterfactual explanations cause the model leakage. Our paper triggers an alarm about the privacy and security of counterfactual explanation service in the cloud and we hope it motivates the corresponding protection strategy in the future. The main contributions of this paper are three-folds:
\begin{itemize}
    \item We dive into the fundamental mechanism of counterfactual explanation generation, and bridge the model extraction attack and counterfactual explanation theoretically. 
    \item We observe that the bottleneck of existing attacks come from decision boundary shift issue, which is caused by training on far-distant queries and CF together. To enhance the efficiency, we propose a simple method DualCF that leverages both the CF and the counterfactual explanation of CF (CCF) into the substitute model training. As CF and CCF have the similar distances to the the decision boundary of cloud model, the boundary shift issue is reduced. 
    \item We conduct extensive experiments on synthetic and real-life datasets. Our study shows that our proposed method can extract the high-fidelity model efficiently and effectively compared with existing methods.
\end{itemize}

\section{Related Work}
% Such explanations are closely related to algorithmic fairness~\cite{DBLP:conf/fat/KasirzadehS21} and recourse~\cite{DBLP:conf/fat/KarimiSV21}. 

\textbf{Counterfactual Explanations.} More complex models are deployed in the cloud for automatic decisions. Due to the black box nature of cloud models, attempts to explain the internal process for a prediction are required to enhance the model trust and rectify negative decisions. Counterfactual explanations~\cite{s2017counterfactual} provide a way to understand the reasons on certain predictions and advice how to make smallest changes to receive a desired prediction. The desired prediction contrasts with the fact (current prediction), which is regarded ``counterfactual''. Counterfactual explanations mainly serve the following purposes: helping users understand why a prediction is made; detecting the model bias for algorithmic fairness~\cite{DBLP:conf/fat/KasirzadehS21}; and providing suggestions to receive a desired result from current decision model~\cite{DBLP:conf/fat/KarimiSV21,ustun2019actionable}. This problem is also studied under other research terminologies like recourse~\cite{ustun2019actionable}, inverse classification~\cite{laugel2017inverse}, and contrastive explanation~\cite{dhurandhar2018explanations}. Following~\cite{s2017counterfactual}, many studies focus on how to model the practical and case-related requirements into mathematical constraints and then solving problems with proper solvers. 

Considering some features are immutable (e.g., race, gender),~\cite{ustun2019actionable} introduces the actionability constraint by freezing the immutable features among improved instance and original query.~\cite{joshi2019towards} requires the probability of counterfactual explanations to follow the data distribution should be large enough to ensure plausible explanations. Diverse explanations are possible to cover multiple choices for each query. Based on this, ~\cite{mothilal2020explaining,russell2019efficient} incorporate diversity constraints on the generated set. As sparser explanations are easier to be adopted for users,~\cite{dhurandhar2018explanations} enforces sparsity property by adding $L_0$ or $L_1$ loss to penalize the changes over many entries of features. In practice, changing a certain feature (e.g., education) may implicitly affect other features (e.g., salary). Hence, it is proper to consider relations~\cite{joshi2019towards,karimi2020algorithmic} or joint effect~\cite{DBLP:conf/fat/PatelSZ21} between feature subsets into modeling. As some features are incomparable,~\cite{wang2021skyline} returns the skyline of non-dominated counterfactual explanations. Once the objective and constraints are determined, it is crucial to develop proper solvers to find satisfied solutions. Typically, different solvers are required for different models and data properties. Besides, varying confidential levels of cloud models require different search process of solvers. For example, integer programming~\cite{ustun2019actionable} or mixed integer programming~\cite{russell2019efficient} solvers are used for linear models with integer or categorical features. If a model is differentiable, gradient descent can be adopted~\cite{s2017counterfactual}. Similarly, model-agnostic method \textit{growing spheres}~\cite{laugel2017inverse} searches the closest counterfactual explanation from the growing sphere around the query. Dijkstra’s algorithm~\cite{poyiadzi2020face} is used for finding a feasible path from the query and the closest counterfactual explanation. Feature tweaking~\cite{tolomei2017interpretable} is designed to retrieve the sub-path leading to target prediction from the current prediction path and select the minimum perturbations for a decision tree model.

\textbf{Model Extraction Attacks.} Model extraction attack intends to train a substitute model that approximates the cloud model well in terms of accuracy and fidelity. The accuracy-oriented methods~\cite{krishna2019thieves,orekondy2019knockoff} aim to create a substitute model that has similar or better performance on a task as the cloud model, while the fidelity-oriented methods \cite{jagielski2020high,juuti2019prada,ijcai2021-336} target to reconstruct a high-fidelity substitute model that approximates the decision boundary of the cloud model as faithfully as possible. In addition to the attacks based on counterfactual explanations, it is also possible to conduct attack with \textit{prediction outputs}, and \textit{gradients}. \textit{Prediction outputs} are the most common outcomes from remote APIs, which consist of the discrete predicted class, predicted class probability and probability vector. Adversaries design various attack algorithms~\cite{papernot2017practical,gong2020model,juuti2019prada,tramer2016stealing,pal2020activethief,yu2020cloudleak} to extract a high-fidelity or high-accuracy model from remote predictions. 
The essential step for these methods is using active learning~\cite{cohn1994improving} strategies to either generate informative synthetic points \cite{papernot2017practical,juuti2019prada,tramer2016stealing,yu2020cloudleak} or select max-coverage points \cite{pal2020activethief}. The ``informative points'' represent the data points close to the decision boundary (i.e., adversarial examples~\cite{goodfellow2014explaining}) of the cloud model, and the ``max-coverage points'' mean the data points should be far distant between each other (i.e., core-set \cite{sener2018active}). \textit{Gradients} help explain the model behavior upon an infinitesimal perturbation~\cite{simonyan2013deep}, a.k.a. the sensitivity in the neighborhood. As such, gradients can be used to explain the feature importance for a prediction of a given input.~\cite{milli2019model} finds that gradient-based feature importance methods can easily expose the cloud model to adversaries. In particular, the gradient of a arbitrary instance is the model weights for a linear model. 

Despite the fact that counterfactual explanations do not disclose the cloud model in its entirety, the security and privacy of CFs have largely been overlooked~\cite{DBLP:conf/fat/KasirzadehS21,DBLP:conf/fat/BarocasSR20}. Some works~\cite{s2017counterfactual,hashemi2020permuteattack} claim that counterfactual explanations cannot expose the internal algorithmic logic except a limited set of dependencies on a single instance. Even though such limited information is trivial for extraction, gathering sufficient information with more queries is prominent to conduct model extraction. 
% Many model extraction attacks ~\cite{tramer2016stealing,krishna2019thieves,orekondy2019knockoff,jagielski2020high,juuti2019prada} are proposed on the prediction outputs from cloud model only. 
\cite{sokol2019counterfactual} points out that counterfactual explanations disclose more secrets of cloud model and can enhance the attack efficiency. A recent study~\cite{aivodji2020model} firstly conducts the model extraction attack on counterfactual explanations by viewing these close-to-boundary explanations as additional training instances, but we observe that it suffers from the decision boundary shift issue caused by far-distant queries especially when the query size is small. Therefore, it requires higher attack cost to query more times. Our proposed DualCF mitigates the boundary shift issue by introducing a novel querying strategy, which also greatly reduces the querying cost.

% \textbf{Model Defense} methods are proposed to prevent or relieve above attacks. Two mainstream countermeasures are: crafting the outputs and detecting the malicious queries. BDPL \cite{zheng2019bdpl} grounds on the observation that most of model extraction attacks exploit fine-tuned queries near the decision boundary of $f_{\theta}$. Therefore, BDPL obfuscates the prediction in the boundary-sensitive zone with differential privacy. Forgotten Siblings \cite{quiring2018forgotten} identifies malicious queries by examining the closeness of user queries to the decision boundary. High ratio of malicious queries (above a threshold) will trigger the system alarm. From them, we can see that the boundary region which counterfactual explanation comes from is critical in both attack and defense. Readers can find more research on model extraction attack and defense in the comprehensive survey paper \cite{gong2020model}. 

\section{Preliminaries}
\label{sec:statement}
% For example, in the previous loan application scenario, the input instance $\boldsymbol{x}$ is materialized as a user profile (e.g., submitted information of the applicant), the model $f_{\theta}$ is materialized as a remote service in the cloud (e.g., automated decision-making system), and counterfactual explanation $\boldsymbol{c}$ indicates the necessary changes in order to have a satisfied outcome, i.e., getting the loan approved.

A cloud model $f_{\theta}:\mathcal{X}\in \mathbb{R}^d \to \mathcal{Y}\in \mathbb{R}$ takes an arbitrary input query $\boldsymbol{x} \in \mathcal{X}$ as the input, and predicts the output $y \in \mathcal{Y}$. In this paper, we assume the cloud model is a pretrained neural network model for classification, parametrized by frozen weights $\theta$. The counterfactual explanation method $g: f_{\theta}\times \mathcal{X} \to \mathcal{X}$ generates a minimal perturbed instance $\boldsymbol{c}\in \mathcal{X}$ for the input instance $\boldsymbol{x}$ such that $f(\boldsymbol{c})$ has the desirable prediction. Formally, searching for the counterfactual explanation $\boldsymbol{c}$ can be framed as the following mathematical formulation,
\begin{align}
\label{eq:definition}
    \mathop{\arg\min}_{\boldsymbol{c}}\ &\ d(\boldsymbol{x}, \boldsymbol{c}) \\
    s.t.\  &\ f(\boldsymbol{c}) = y\text{ and } f(\boldsymbol{c}) \neq f(\boldsymbol{x}) \nonumber
\end{align}
where $d(\cdot): \mathcal{X}\times \mathcal{X} \rightarrow \mathbb{R}_{+}$ is a distance (cost) metric measuring the changes between the input $\boldsymbol{x}$ and $\boldsymbol{c}$, and $y$ is the desirable target which is different from the original prediction $f(\boldsymbol{x})$. That is, we seek to find counterfactual explanations that belong to a target class $y$ while still remains proximal to the original instance. In addition to the constraints in Eq~\eqref{eq:definition}, more constraints such as sparsity, feasibility and diversity can be added according to task-specific requirements~\cite{verma2020counterfactual}, which are left for future work.

In this paper, we focus on the high-fidelity extraction attack on counterfactual explanations which aims to extract a functional equivalent or near-equivalent model $h_{\phi}$ that behaves very similarly to the model $f_{\theta}$. It can be formulated as the following mathematical problem: for a set of queries $\mathcal{D}$ and a set of corresponding counterfactual explanations, it finds a substitute model $h_{\phi}$ that performs equivalently on an evaluation set $\mathcal{T}$.
\begin{align}
\label{eq:obj}
    \max & \sum_{\boldsymbol{x}_i \in \mathcal{T}}\mathds{1}_{f_{\theta}(\boldsymbol{x}_i) = h_{\phi}(\boldsymbol{x}_i)}\\
    s.t.\ & h_{\phi}(\boldsymbol{x}) = f_{\theta}(\boldsymbol{x}), \boldsymbol{x} \in \mathcal{D} \\ 
         & \boldsymbol{c} = g(f_{\theta},\boldsymbol{x}), \boldsymbol{x} \in \mathcal{D} \\
         & h_{\phi}(\boldsymbol{c}) = f_{\theta}(\boldsymbol{c})
\end{align}
The existing method~\cite{aivodji2020model} feeds a query $\boldsymbol{x}$ into the API and then obtains the prediction $f(\boldsymbol{x})$ and counterfactual explanation $\boldsymbol{c}$. The pairs of $(\boldsymbol{x}, f(\boldsymbol{x}))$ and $(\boldsymbol{c}, f(\boldsymbol{c}))$ are treated as the training set without discrimination to learn the substitute model $h_{\phi}$. 

\section{Proposed Approach}
\label{sec:metohds}

\subsection{Relation between Model Extraction Attack and Counterfactual Explanations}

% The following section first introduces the mechanism for counterfactual explanation generation theoretically, and then bridges the model extraction attack and counterfactual explanations. 
In this section, we observe that counterfactual explanations explicitly reveal not only the decision boundary location of cloud model but also important features of the cloud model. Such information is favorable for model extraction attack. 

Here, we illustrate the information leakage through construction of counterfactual explanations. Let $\{x_1, x_2, ..., x_d\}$ denote $d$-dimensional features of an instance $\boldsymbol{x}$, where each feature $x_i$ is associated with a feature importance value $w_i$ w.r.t. cloud model's prediction, e.g., $w_i$ is $i$-th coefficient if $f_{\theta}$ is a linear model. A larger $w_i$ leads to a larger prediction change if we tweak the feature $x_i$ by the same amount. For simplicity, we consider the scenario with $2$ features ($d=2$) and $w_1 > w_2$. In Eq.~\eqref{eq:definition}, note that the prediction change $\Delta f = f(\boldsymbol{c}) - f(\boldsymbol{x})$ from current prediction $f(\boldsymbol{x})$ to desired prediction $f(\boldsymbol{c})$ should be strictly satisfied, and the explanations with minimal cost (measured by some distance metrics) are preferred. We write the objective on 2d example as,
$$\mathop{\arg\min}_{\boldsymbol{c}} ||(c_1 - x_1, c_2 - x_2)||_p, \text{ s.t. } (c_1 - x_1, c_2 - x_2) (w_1, w_2)^T = \Delta f.$$
As $w_1$ is larger in our assumption, smaller change $(c_1 - x_1)$ can satisfy the prediction change in the constraint and then associates with a smaller cost in the objective. Therefore, the ideal counterfactual explanation should change the feature $x_1$ first. From the above analysis, we can conclude that counterfactual explanation methods naturally give high priority to change the features that have larger importance to the desirable prediction. 

Essentially, model extraction attack seeks to infer the decision boundary of cloud model from the parameter space. With the favorable information from counterfactual explanations besides prediction output, the search process of substitute model is accelerated. As counterfactual explanations and queries have different predictions, there must exist a decision boundary to separate them. In addition, the feature with minor change corresponds to the important features in the cloud model. Therefore, we can infer the full or partial ranking of important features, which also help enhance the attack efficiency. The closeness of counterfactual explanations to decision boundary of cloud model should be also unearthed for efficient attacks. 

\subsection{Decision Boundary Shift Issue by Far-distant Queries}
\label{subsec:over_confidence}

The straightforward method takes counterfactual explanations and queries as training samples of the substitute model, as pointed in~\cite{aivodji2020model}. However, we observe that this method suffers from the decision boundary shift issue, i.e., the substitute model's decision boundary shifts away from the ground truth. The issue is more severe especially when query size is smaller. To relieve the issue, existing methods have to adopt more queries, which result in higher querying cost. Here, we illustrate where this issue comes from and how it affects the attack efficiency. 

In the beginning, adversaries know nothing about the decision boundary of cloud model and therefore the queries to upload may be far from the decision boundary of cloud model. However, counterfactual explanations produced by cloud model are usually close to decision boundary, and in the other side of the decision boundary. When we train the substitute model with counterfactual explanations and queries, the decision boundary of the substitute model tends to move to the middle area of the query and CF to separate them confidently. This deviates from the fact that decision boundary of the cloud model is close to counterfactual explanations and may be far from queries. As shown in Fig.~\ref{fig:shift}(a) and (b), the substitute model's decision boundary is distorted by far-distant queries. Due to the sampling variance in queries, the substitute model's decision boundary may move to different regions. This results in a unstable substitute model for an attack method. That is why existing methods need to use more queries to achieve high fidelity and algorithm robustness. 

\begin{figure*}
    \centering
    \includegraphics[width = \linewidth]{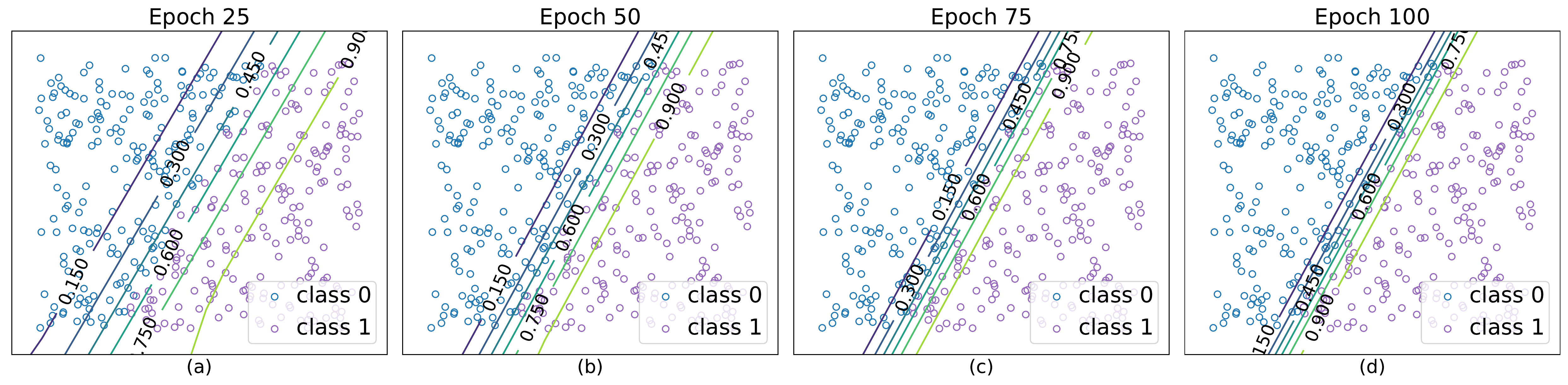}
    \caption{The probability density contour lines during training stage (at epoch 25, 50, 75, 100) of the cloud model trained on a synthetic dataset for binary classification task. The blue and purple dots are sampled points from training set. The values on the contour lines represent the probability calculated by the sigmoid function with range being $[0,1]$. We can see the model tends to assign high probabilities to more samples during training.}
    \label{fig:saturation}
\end{figure*}

% More specifically, we observe that this issue roots in the over-confident predictions phenomenon in deep neural network models. 
For probabilistic models, the constraint $f(c) = y$ in Eq.~\eqref{eq:definition} becomes $p(y|c) > \epsilon$, meaning that the counterfactual explanation method searches the closest explanations above a given probability threshold $\epsilon$ of target class. As shown in Fig.~\ref{fig:saturation}(a), regions closer to decision boundary correspond to probability around $0.5$, wherein the model is uncertain about its predictions. Regions far away from decision boundary, on the contrary, correspond to either smaller or larger values, and the values indicate more confident predictions on class 0 and 1, respectively. In this case, could a higher probability threshold $\epsilon$ push counterfactual explanations away from the decision boundary to reduce the issue? The answer is yes, but the fact is that even if we set a higher probability threshold, counterfactual explanations are still close to the decision boundary. We observe that this issue roots in the over-confident predictions phenomenon in cloud models.  That is, deep neural networks tend to assign high probability to all training instances, resulting in that instances with high probability can be close to the decision boundary, as shown in Fig.~\ref{fig:saturation}(d). We discuss this phenomenon in the following.  

For a general neural network model for classification, the probability of predicted label belonging to class $k \in [1, K]$ is computed as $p(k) = \frac{\exp(z_k)}{\sum_{i=1}^K \exp(z_i)}$ or $p(k) = \sigma(z_k)$, where $z_i$, $z_k$ represent the logits in the last layer of the neural network. The cross-entropy loss for classification is defined as $\ell =-\sum_{k=1}^K q(k)\log(p(k))$, where $q$ is a one-hot vector for ground truth. Minimizing this loss function is equivalent to maximize the likelihood of ground truth, in other words, $z_k >> z_i$ for $i \neq k$ is desired if the ground truth class is $k$. As all training points are undifferentiated during training, the model tends to assign higher probability on actual class entry for each instance in order to minimize the loss function. This brings the over-confident prediction problem: the probability increases quickly and flattens if we gradually move an instance away from the decision boundary. We illustrate this with a binary classification model on a synthetic dataset in Fig.~\ref{fig:saturation}. The figure shows that more and more instances (i.e., instances in the left side of $<10\%$ contour and instances in the right side of $>90\%$ contour) have high probability ($>90\%$ for class 1 or $<10\%$ for class 0) and hence the probability density contour lines become denser and denser during training. Therefore, simply adjusting the probability threshold $\epsilon$ cannot ease the issue since counterfactual explanations are consistently close to the decision boundary due to the over-confident prediction problem.

\subsection{DualCF}

To overcome the decision boundary shift issue resulting from far-distant queries, we propose our algorithm \textit{DualCF} by considering both CF and the counterfactual explanation of CF (CCF) as pairs of training samples for substitute model. As discussed previously, the decision boundary of the substitute model is distorted by taking far-distant queries and close-to-boundary counterfactual explanations into training. Our intuition is that, if training samples with different classes have similar distances to the decision boundary of cloud model, then this issue can be alleviated. As counterfactual explanations locate in the close-to-boundary region, could we search close-to-boundary queries directly? Actually, we do not know the decision boundary of cloud model in the beginning. Without decision boundary information, it is impossible to find the close-to-boundary queries directly. However, we can treat the counterfactual explanation of an arbitrary query as another query to the cloud API and obtain a CCF with the same prediction of original query. Since both CF and CCF are produced by the same counterfactual explanation API, CF and CCF have the similar distances to the decision boundary of the cloud model naturally. In addition, CCF and CF reside in opposite regions of the decision boundary due to their different predictions of classes. As such, the decision boundary shift issue is mitigated by adopting CF and CCF into substitute model training. 

The proposed \textit{DualCF} is listed in Algorithm \ref{alg:algorithm}. Our method first uploads a query $x$ into the counterfactual explanation API and obtain an explanation $\boldsymbol{c}$ (line 5). Secondly, our method treats current explanation $\boldsymbol{c}$ as another query and obtains a CCF $\boldsymbol{c}'$ (line 6). Similarly, we generate the pair of CF and CCF for each of $N$ queries and denote them as $\mathcal{S}$. Finally, we train the substitute model $h_{\phi}$ with $\mathcal{S}$. It needs to be emphasized that $\boldsymbol{c}$ and its corresponding $\boldsymbol{c}'$ should be used in the same batch in order to mitigate the decision boundary shift issue. Hence, our proposed DualCF is a simple yet efficient querying strategy. The major difference between DualCF and recent studies \cite{tramer2016stealing,aivodji2020model} is how to construct the training set for $h_{\phi}$. Steal-ML~\cite{tramer2016stealing} collects $(\boldsymbol{x}, f(\boldsymbol{x}))$, Model Extraction~\cite{aivodji2020model} adopts $(\boldsymbol{c}, f(\boldsymbol{c}))$ and $(\boldsymbol{x}, f(\boldsymbol{x}))$ while our proposed DualCF uses $(\boldsymbol{c}, f(\boldsymbol{c}))$ and $(\boldsymbol{c}', f(\boldsymbol{c}'))$. Note that the initial query $\boldsymbol{x}$ is not used as training set in our DualCF. We also introduce a variant of our proposed method, denoted as DualCFX, which takes $\boldsymbol{x}$ as training data as well, i.e., $(\boldsymbol{c}, f(\boldsymbol{c})), (\boldsymbol{c}', f(\boldsymbol{c}'))$, and $(\boldsymbol{x}, f(\boldsymbol{x}))$. 

\begin{algorithm}[tb]
\caption{DualCF}
\label{alg:algorithm}
\begin{algorithmic}[1]
\STATE \textbf{Input}: Queries $\{\boldsymbol{x}_i\}_{i=1}^N$, cloud model $f_{\theta}$, counterfactual explanation API $g(\cdot)$.\\
\STATE \textbf{Output}: Substitute model $h_{\phi}$.
\STATE Initialize empty training set for substitute model $\mathcal{S} = \{\}$.
\FOR{i$<$ N}
\STATE $\boldsymbol{c} = g(f_{\theta}, \boldsymbol{x})$ 
\STATE $\boldsymbol{c}' = g(f_{\theta}, \boldsymbol{c})$ 
% \STATE $\mathcal{S} = \mathcal{S} \bigcup \{\boldsymbol{c}, \boldsymbol{c}'\}$
\STATE $\mathcal{S}$ = $\mathcal{S} \bigcup \{ (\boldsymbol{c}, f(\boldsymbol{c})), (\boldsymbol{c}', f(\boldsymbol{c}'))\}$
\ENDFOR
\STATE cur\_iter = 0
\WHILE{cur\_iter $<$ max\_iter}
\STATE Minimize the training loss of $h_{\phi}$ on dataset $\mathcal{S}$.
\STATE cur\_iter++.
\ENDWHILE
\STATE \textbf{return} $h_{\phi}$.
\end{algorithmic}
\end{algorithm}

We visualize the differences between existing attack methods and DualCF in Fig.~\ref{fig:toy_example} on a binary dataset. Two classes (shown as blue and purple dots) are separated by the model $f_{\theta}$. Three orange dots are three sampled queries, red hexogons and green triangles are CFs and CCFs respectively. The queries, corresponding CFs and CCFs are numbered by $0,1,2$. Steal-ML~\cite{tramer2016stealing} with prediction output only achieves lowest agreement. Model Extraction adds counterfactual explanations into training and achieves higher agreement. We also notice the decision boundary are pulled away from the ground truth by far-distant queries in Fig.~\ref{fig:toy_example}(c). With CFs and CCFs as training pairs, our method obtains a substitute model with the highest agreement.  

\begin{figure*}
    \centering
    \includegraphics[width = \linewidth]{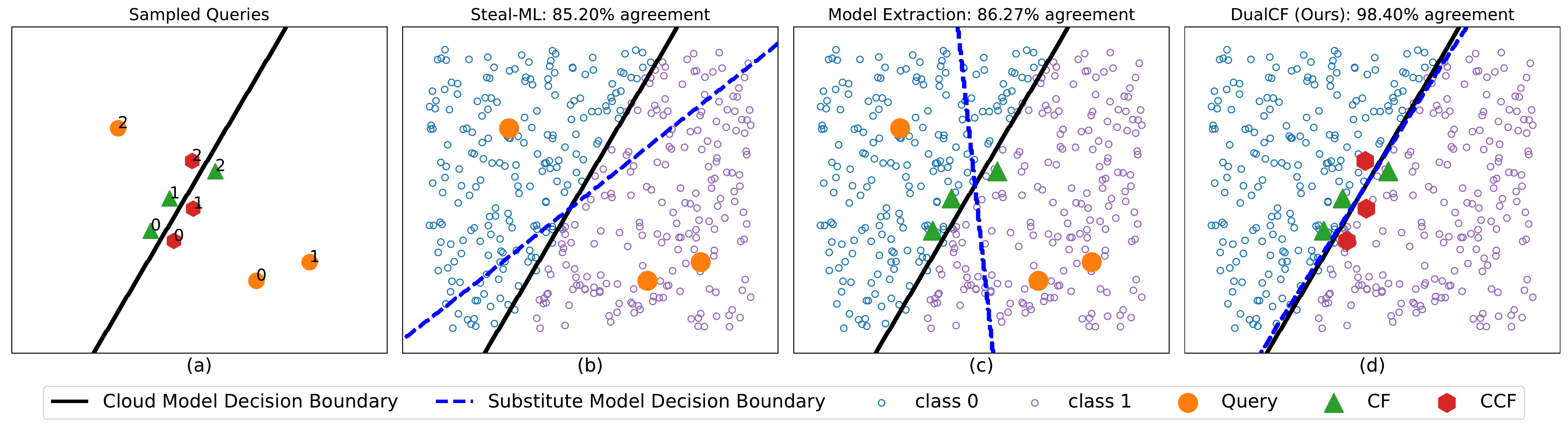}
    \caption{The toy examples on a binary synthetic dataset. We show $3$ queries and corresponding CFs and CCFs (orange dots, green triangles, and red hexagons) in figure (a). Figure (b), (c) and (d) illustrate the substitute model's decision boundary produced by 3 attack methods. Steal-ML uses prediction outputs of queries, Model Extraction applies both queries and CFs, and our proposed DualCF considers CFs and CCFs. Our proposed \textit{DualCF} achieves the highest agreement.}
    \label{fig:toy_example}
\end{figure*}

From the above analysis, we can see the advantages of DualCF. CF $\boldsymbol{c}$ and CCF $\boldsymbol{c}'$ are both close to decision boundary and have different predictions, not only reducing the boundary shift issue but also bringing a tighter space for inferring the decision boundary of cloud model. The CF $\boldsymbol{c}$ and CCF $\boldsymbol{c}$ work as the support vectors in SVM algorithm~\cite{cortes1995support}. Secondly, the number of each class is balanced favoring the learning process. Thirdly, our method is less sensitive to the sampling procedure of the queries because CF and CCF locate in the denser region close to the decision boundary. 

\subsection{DualCF for A Linear Model}

Here, we illustrate how our DualCF extracts a functionally equivalent model from only a pair of CF and CCF for a linear model. Suppose we have a binary linear model $f_{\theta} = \sigma (\langle \theta, x \rangle + b)$ , where the decision boundary is determined by the parameter $\theta$ and $\sigma$ is the sigmoid function. For a query $\boldsymbol{x}$, its counterfactual explanation $\boldsymbol{c}$ can be optimally found by searching along the direction of gradient ($f_{\theta}'(x) = f_{\theta}(x) * (1-f_{\theta}(x)) * \theta$), stopping when it reaches a certain probability $\epsilon$ belonging to the target class. Using the API, we obtain the CF and CCF of a query. 

\begin{lemma}
For a binary linear model $f_{\theta}$, we can extract a substitute model $h_{\phi}$ with $100\%$ agreement, from a pair of CF $\boldsymbol{c}$ and CCF $\boldsymbol{c}'$ given an input $\boldsymbol{x}$.
\end{lemma}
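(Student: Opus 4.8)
The plan is to exploit the rigid geometry that a sigmoid-linear model imposes on the counterfactual search. First I would note that the gradient $f_{\theta}'(\boldsymbol{x}) = f_{\theta}(\boldsymbol{x})\,(1-f_{\theta}(\boldsymbol{x}))\,\theta$ is, at every point, a strictly positive multiple of the fixed vector $\theta$, since the scalar factor lies in $(0,\tfrac14]$. Consequently any search moving an instance along $\pm$ its gradient stays on the single line $\boldsymbol{x}+\mathbb{R}\theta$. Applying this twice, the CF is $\boldsymbol{c}=\boldsymbol{x}+t_1\theta$ (moving toward the target class) and the CCF is $\boldsymbol{c}'=\boldsymbol{c}-t_2\theta$ (moving back across the boundary) for some $t_1,t_2>0$. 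The first payoff is immediate: $\boldsymbol{c}-\boldsymbol{c}'=t_2\theta$ is parallel to $\theta$, so the returned pair recovers the true boundary's normal direction exactly, up to a positive scale.

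Next I would locate one point on the true boundary. Writing $s(\boldsymbol{z})=\langle\theta,\boldsymbol{z}\rangle+b$ for the logit, the stopping rules give $\sigma(s(\boldsymbol{c}))=\epsilon$ and $\sigma(s(\boldsymbol{c}'))=1-\epsilon$, because the CCF is driven to probability $\epsilon$ of the class opposite to $\boldsymbol{c}$. Hence $s(\boldsymbol{c})=\sigma^{-1}(\epsilon)$ and $s(\boldsymbol{c}')=\sigma^{-1}(1-\epsilon)=-\sigma^{-1}(\epsilon)$, where the crucial ingredient is the odd symmetry $\sigma^{-1}(1-\epsilon)=-\sigma^{-1}(\epsilon)$ of the logit about $0.5$. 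The midpoint $\boldsymbol{m}=\tfrac12(\boldsymbol{c}+\boldsymbol{c}')$ then satisfies $s(\boldsymbol{m})=\tfrac12\big(s(\boldsymbol{c})+s(\boldsymbol{c}')\big)=0$, so $\boldsymbol{m}$ lies exactly on the decision hyperplane of $f_{\theta}$.

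I would then assemble the substitute model. A hyperplane is fixed by a normal direction together with one of its points, and I now have both: direction $\boldsymbol{c}-\boldsymbol{c}'$ and base point $\boldsymbol{m}$. Taking $h_{\phi}(\boldsymbol{z})=\sigma\big(\langle \boldsymbol{c}-\boldsymbol{c}',\,\boldsymbol{z}-\boldsymbol{m}\rangle\big)$ yields a boundary identical to that of $f_{\theta}$; since the known labels $f(\boldsymbol{c})$ and $f(\boldsymbol{c}')$ tell us which half-space is the positive class, I orient the normal toward the class-$1$ point (negating it if needed), so the two half-spaces also match. Because hard classification depends only on which side of the boundary a point lies, $h_{\phi}$ and $f_{\theta}$ agree on every input and hence achieve $100\%$ agreement on any $\mathcal{T}$; the sigmoid slope is irrelevant to the hard labels, so matching direction and offset alone suffices.

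The step I expect to be the crux is showing the midpoint lands precisely on the boundary, since this is what upgrades ``recovering the normal'' to ``recovering the full hyperplane.'' It rests on two facts that I would state explicitly as hypotheses: that the search follows the exact gradient, forcing $\boldsymbol{x},\boldsymbol{c},\boldsymbol{c}'$ to be collinear with $\theta$, and that the CF and CCF use thresholds symmetric about $0.5$, so the logit's odd symmetry sends the two stopping logits to $\pm\sigma^{-1}(\epsilon)$. Were the thresholds asymmetric, the boundary would instead be the affine combination of $\boldsymbol{c}$ and $\boldsymbol{c}'$ at parameter $s(\boldsymbol{c}')/(s(\boldsymbol{c}')-s(\boldsymbol{c}))$ --- still recoverable once $\epsilon$ is known, but no longer the clean midpoint that the lemma's statement implicitly uses.
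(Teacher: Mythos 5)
Your proposal is correct and takes essentially the same route as the paper's proof: recover the boundary's normal direction from the collinearity of $\boldsymbol{x}$, $\boldsymbol{c}$, $\boldsymbol{c}'$ along $\theta$, and recover a point on the boundary as the midpoint $\tfrac12(\boldsymbol{c}+\boldsymbol{c}')$. In fact your write-up is slightly more careful than the paper's, which simply asserts the symmetric form $\boldsymbol{c}=\boldsymbol{x}_0+a\theta$, $\boldsymbol{c}'=\boldsymbol{x}_0-a\theta$ with a common $a$, whereas you justify that equidistance explicitly via the odd symmetry $\sigma^{-1}(1-\epsilon)=-\sigma^{-1}(\epsilon)$ of the logit under a common threshold.
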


\begin{proof}

As the model is linear, the CF and CCF can be optimally found along or against the gradient direction. We draw a straight line through the input point $\boldsymbol{x}$ which is perpendicular to decision boundary. The intersection point of the straight line and decision boundary is denoted as $\boldsymbol{x}_0$. Then, the CF and CCF can be written as
\begin{align}
  \boldsymbol{c} &= \boldsymbol{x}_0 + a \theta \\
  \boldsymbol{c}' &= \boldsymbol{x}_0 - a \theta
\end{align}
respectively, where $a$ depends on the probability threshold for a counterfactual explanation. As we have $\boldsymbol{c}$ and $\boldsymbol{c}'$ on hand, we can obtain a point $\boldsymbol{x}_0 = \frac{\boldsymbol{c} + \boldsymbol{c}'}{2}$ on the decision boundary. Besides, from any two of $\boldsymbol{x}$, $\boldsymbol{c}$ and $\boldsymbol{c}'$, we can obtain the slope $\theta$. With slope and a point in the line, we can obtain the decision boundary without any training. The lemma holds. 
\end{proof}

Although extending the lemma to a nonlinear model remains unsolved, this linear scenario verifies that our proposed method can replicate the cloud model efficiently. In experiment section, we evaluate proposed DualCF in more general situations where the cloud model is nonlinear and complex.

% \begin{figure}[t]
%     \centering
%     \includegraphics[width = 0.55\linewidth]{Fig/Toy Example.pdf}
%     \caption{The toy example of binary linear classification. The input, cf and ccf are shown as $\boldsymbol{x}$, $\boldsymbol{c}$, and $\boldsymbol{c}'$ in the figure. In binary classification, the counterfactual explanation can be found optimally along or against the gradient direction. We can deduce the slope from any two of $\boldsymbol{x}$, $\boldsymbol{c}$, and $\boldsymbol{c}'$. The point $\boldsymbol{c}'$ in the decision boundary of $f_{\theta}$ is the midpoint of $\boldsymbol{c}$, and $\boldsymbol{c}'$. With the slope and a point, we can obtain a substitute model with $100\%$ agreement.}
%     \label{fig:example}
% \end{figure}

\section{Experiments}

In this section, we conduct extensive experiments to evaluate the proposed \textbf{DualCF} and its variant \textbf{DualCFX} on both synthetic and real-life datasets to compare with state-of-the-art methods. 

\subsection{Baselines}
We employ the following state-of-the-art methods to verify the performance of our method. 
\begin{itemize}
    \item \textbf{Steal-ML} \cite{tramer2016stealing}. This method first labels the queries with the cloud model $f_{\theta}$, and then trains the model $h_{\phi}$ on the labelled dataset. We randomly select queries for this method. 
    \item \textbf{Steal-ML (CoreSet)} \cite{pal2020activethief}. As random selection may choose redundant or similar samples that do not bring more useful information for extraction, we consider the second baseline that leverages the CoreSet algorithm in \cite{ijcai2021-336} to select the most distant samples to query the cloud model. 
    \item \textbf{Model Extraction} \cite{aivodji2020model}. This method is proposed to extract a substitute model for counterfactual explanation methods. It utilizes the prediction of queries and counterfactual explanations from the cloud model to train the substitute model. 
\end{itemize}

\subsection{Implementation Details}
We first train a Multilayer Perceptron (MLP) model $f_{\theta}$ as the cloud model for each dataset. For the counterfactual explanation method, we use the existing algorithm~\cite{s2017counterfactual} implemented in DiCE \cite{dice_github} (without diverse requirement) for the cloud model. The default objective function $d(\cdot)$ is the $L2$ distance metric on normalized features. We display counterfactual explanations from DiCE on synthetic datasets in Fig.11 in Appendix A.1. After that, we train another MLP model $h_{\phi}$ to approximate cloud model $f_{\theta}$. For simplicity, cloud model and substitute model have the same architecture, but we train both models with different random initializations and different data. In the following experiments, we also study the influence of substitute models with different capacities from cloud model. Adam optimizer is used for minimizing the binary cross-entropy loss of cloud model and substitute model. We stop the substitute model training until it reaches a maximum epoch. For fair comparison, all substitute models for all baselines and proposed methods on each dataset have the same training settings and architectures. We introduce more details of substitute model training in Appendix A.2.

\subsection{Evaluation Metrics}
As we target to build a high-fidelity model $h_{\phi}$ that behaves as similar to $f_{\theta}$ as possible, we firstly define the \textbf{agreement} which measures the prediction difference between the $f_{\theta}$ and $h_{\phi}$ on a evaluation set of size $n$,
\begin{align}
    \text{Agreement} = \frac{1}{n}\sum_{i=1}^{n}\mathds{1}_{f_{\theta}(\boldsymbol{x}_i) = h_{\phi}(\boldsymbol{x}_i)}.
\end{align}
A higher agreement is better for substitute models on the same queries. To show how the query size influences the algorithm behavior, we gradually select increasing number of queries and report the agreement on them. A higher agreement curve is better. To reduce the experiment variance due to random sampling of queries and random initialization in model training, we compute the average agreement over 100 runs for a fixed query size. We also compare the \textbf{standard deviation (std)} of 100 runs to measure the algorithmic stability. The smaller standard deviation means that the attack method is more stable to the sampling procedure and model initialization. If two model extraction methods have the same agreement, then lower standard deviation is preferred. 

\subsection{Datasets}

We consider the following $5$ datasets, which are widely used in research on counterfactual explanations for evaluation. 

\begin{itemize}

\item \textbf{Synthetic Dataset}. We generate two 2-dimensional datasets where all data points are uniformly sampled from a close interval $[0, 6]$. We draw a straight line on the first dataset and a S-curve on the second dataset, to separate data points of each dataset into two parts and assign each part with a label. The first dataset is named \textbf{Syn-Linear} and the second dataset is named \textbf{Syn-Nonlinear} respectively. 

\item \textbf{Give Me Some Credit (GMSC)} \cite{gmsc_dataset}. This dataset is collected for predicting whether someone will experience finance distress in the next two years by his/her financial and demographic information (10 numerical features). The full dataset contains $150,000$ applicants where $139,974$ applicants are labeled as ``good'' and $10,026$ applicants are labeled as ``bad''. We select the first $10,026$ ``good'' records and all ``bad'' records to form the final dataset for balance. We follow the pre-processing procedure in \cite{gmsc_preprocessing} to fill the missing values, remove outliers and delete irrelevant features, etc. 

\item \textbf{Heloc Dataset} \cite{heloc_dataset}. It is used for predicting whether an individual will repay the Heloc account in two years by $23$ numerical features describing the personal information. The target variable ''RiskPerformance'' is binary, 5000 ``good'' records indicating no default , 5459 ``bad'' records indicating the opposite. We keep the top-10 important features based on the analysis \cite{heloc_preprocessing} of the IBM team, which is the champion of FICO Heloc challenge.  We adopt the same pre-processing in \cite{heloc_preprocessing} to remove abnormal values. 

\item \textbf{Boston Housing Dataset} \cite{harrison1978hedonic}. The dataset has $506$ records, where each record has 12 features for predicting the house price in Boston. We follow the pre-processing of the alibi tutorial \cite{boston_preprocessing}, which firstly transforms the continuous labels into binary classes based on whether the house price is above the median or not, and then removes three categorical features. 

\end{itemize}

We split these datasets into $3$ disjointed sets: training, query and evaluation sets at the ratio of $50\%$, $25\%$, $25\%$. We describe them in detail as follows: (1) \textbf{training set} (50\%) is used for training the cloud model $f_{\theta}$; (2) \textbf{query set} (25\%) is used to feed into counterfactual explanation API $g(\cdot)$ for obtaining counterfactual explanations, and we train the substitute model $h_{\phi}$ on them. Note that we merely upload a subset of queries to counterfactual explanation API in our experiments; (3) \textbf{evaluation set} (25\%) is for evaluating how faithful the substitute model $h_{\phi}$ is. As we target to build a high-fidelity model, we use the predicted label instead of the true label on the evaluation set. 
A validation set for tuning the cloud model $f$ is omitted because we assume the cloud model is given and frozen. Therefore, we directly use the architecture and hyper-parameters from public to train the cloud model $f_{\theta}$. For the cloud model, we normalize the features by standard normalization where the mean value and variance are estimated on the training set. For the substitute model, we compute the mean value and variance on the full query set for standard normalization. 
% Note that even if we know the cloud model architecture and training details, we only access the prediction and counterfactual explanations from APIs once the cloud model $f_{\theta}$ is frozen.  

% \begin{itemize}
%     \item \textbf{Training set} (50\%) is used for training the cloud model $f_{\theta}$. 
%     \item \textbf{Query set} (25\%) is used to feed into counterfactual explanation method $g(\cdot)$ for obtaining counterfactual explanations, and we train the substitute model $h_{\phi}$ on them. Note that we merely upload a subset of queries to counterfactual explanation API in our experiments. 
%     \item \textbf{Evaluation set} (25\%) is for evaluating how faithful the substitute model $h_{\phi}$ is.  As we target to build a high-fidelity model, we use the predicted label instead of the true label on the evaluation set. 
% \end{itemize}

\subsection{Experimental Results}

\begin{figure*}[tb]
    \includegraphics[width = \linewidth]{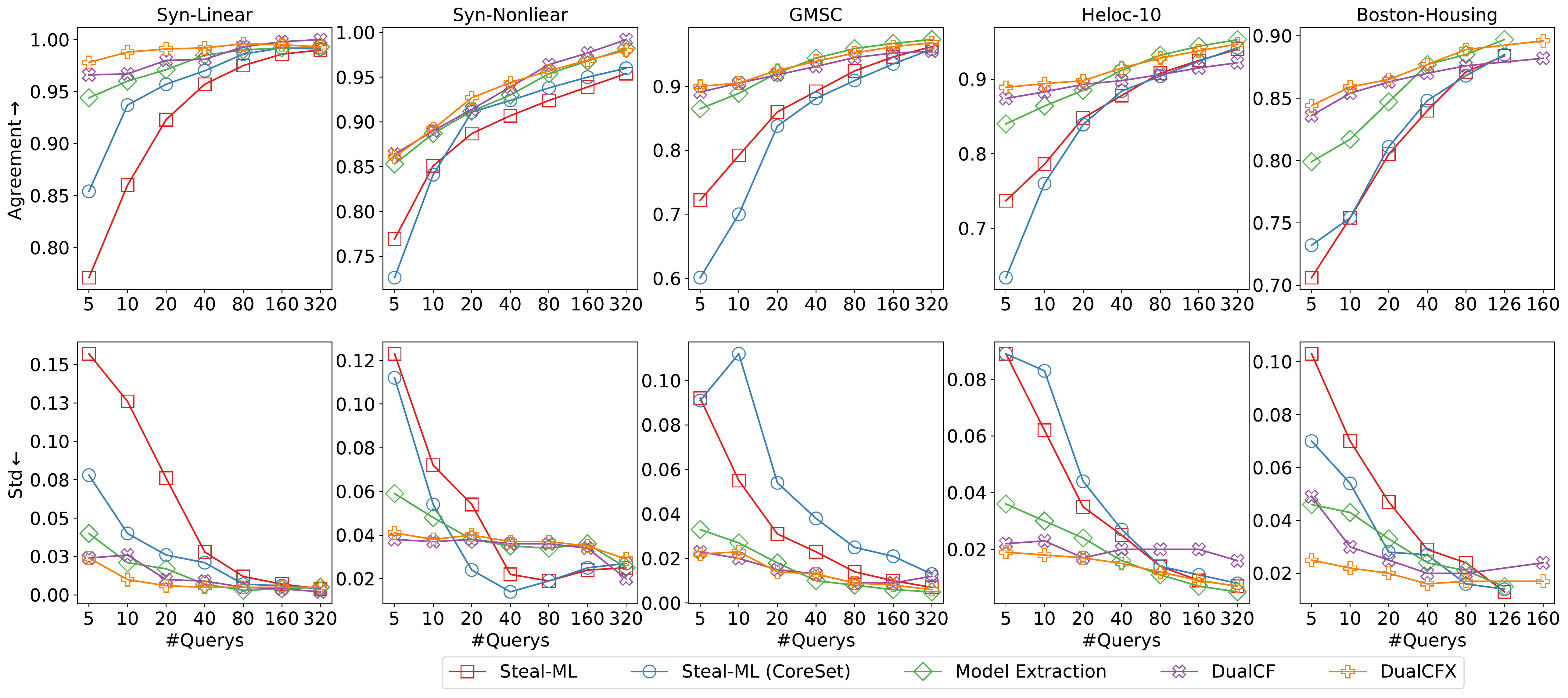}
    \caption{Experiment results on synthetic and real-life datasets. The upper figures report the average agreement while the bottom figures report the standard deviation of agreements of $100$ runs for a fixed size of queries.  $\uparrow$/$\downarrow$ mean the higher/lower results are better.}
    \label{fig:exp_results}
\end{figure*}

We report competitive experimental results on synthetic and real-life datasets in Fig.~\ref{fig:exp_results}. The x-axis represents the size of queries in each run. The y-axis of the upper figures represents the average agreement of $100$ runs while the y-axis of the bottom figures describes the standard deviation of agreements of $100$ runs. From them, we can see (1) Extraction methods on counterfactual explanations achieve better results that those merely on prediction output. This is consistent with our previous analysis where counterfactual explanations leak the boundary information of the cloud model which can help the model extraction; (2) DualCF and DualCFX achieve the best agreement on five datasets when the query size is small; (3) DualCFX will improve the DualCF slightly because it takes the queries into training. More training data will result in better performance. Therefore, we suggest adopting the DualCFX as the default method; (4) The standard deviation of agreements of our proposed methods is smaller than baseline methods in general, which means our methods are less sensitive to the sampling procedure; (5) 
When the query size increases, the gaps between all algorithms decrease because too many predictions from the cloud model are enough to train a good substitute model. 

\subsection{Ablation Studies}

Next, we perform several ablation studies to investigate the effects of some experiment factors on extraction performance. 

\begin{figure*}
    \centering
    \includegraphics[width = \linewidth]{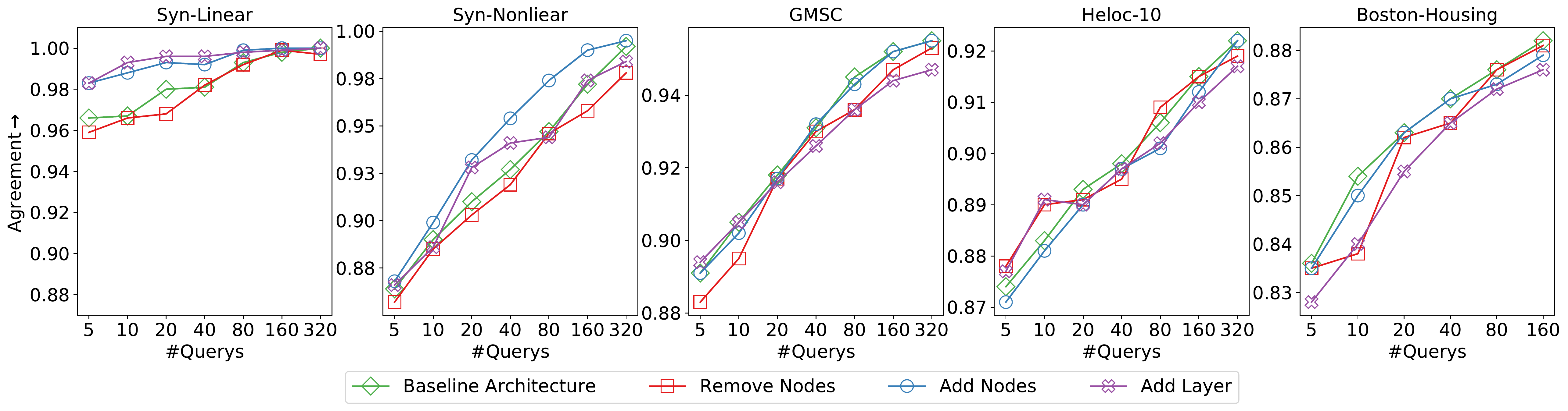}
    \caption{Experiment comparisons on different model capacities in proposed DualCF. ``Remove Nodes'', ``Add Nodes'', and ``Add Layer'' represent three variants of baseline architectures. }
    \label{fig:architecture}
    
    \centering
    \includegraphics[width = \linewidth]{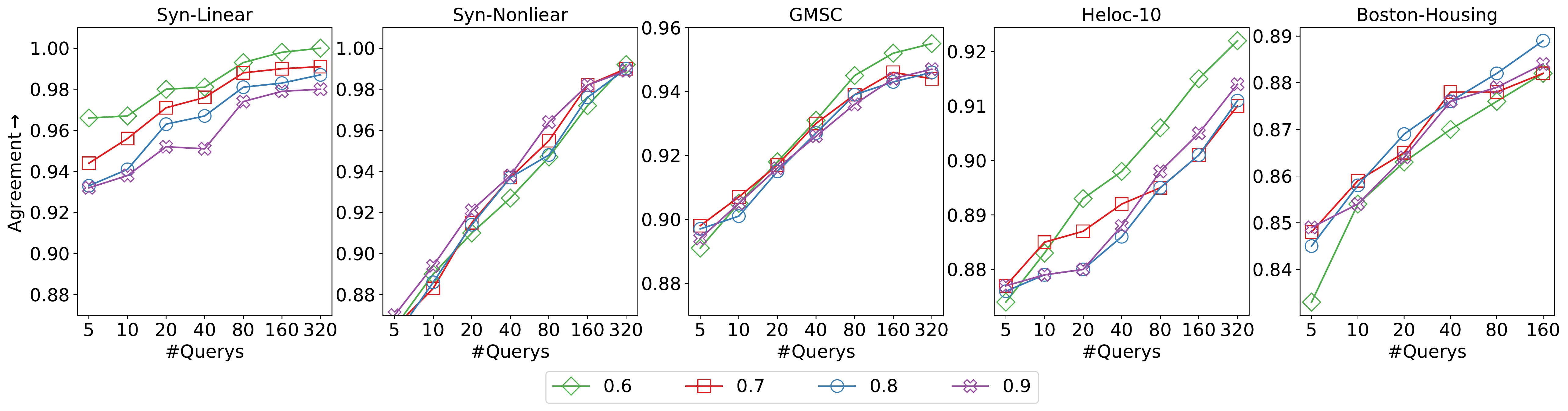}
    \caption{Experimental comparisons on different thresholds in counterfactual explanation generation in our proposed DualCF. We select different thresholds \{0.6, 0.7, 0.8, 0.9\} as the prediction confidence to generate different CFs and CCFs.}
    \label{fig:threshold}
    
    \centering
    \includegraphics[width = \linewidth]{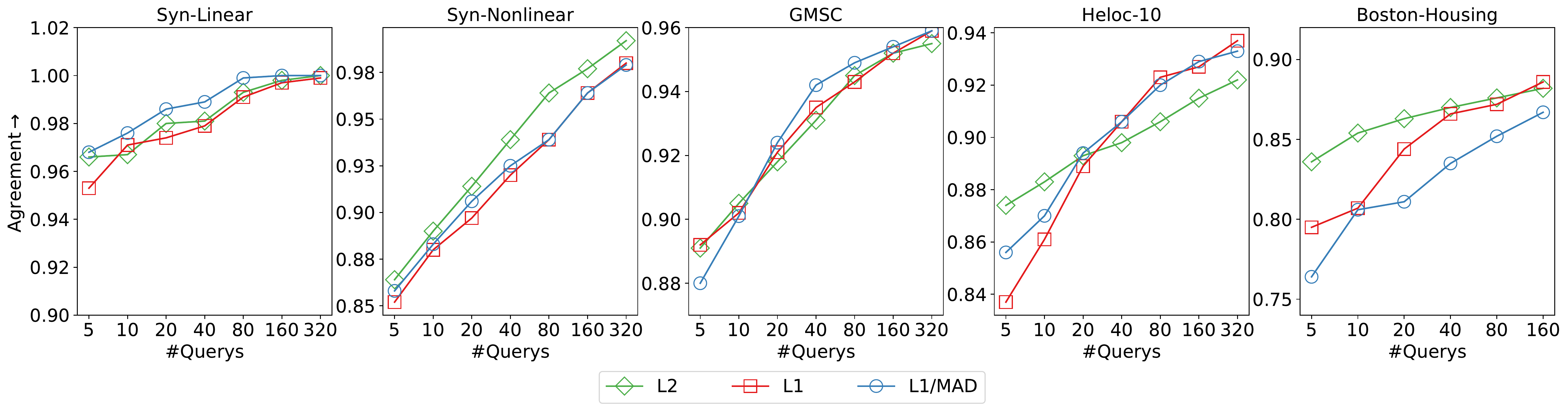}
    % \caption{Comparisons on different distance metrics in our proposed DualCF. We generate CFs and CCFs using $L2$, $L1$ and $L1/MAD$ distance and train the substitute models on them. Because all counterfactual explanations locate in the close-to-boundary region regardless the distance selected, our method can extract a model with higher agreement.}
    \caption{Experimental comparisons on different distance metrics in our proposed DualCF.}
    \label{fig:distance}
    
\end{figure*}

% \begin{figure*}
%     \centering
%     \includegraphics[width = \linewidth]{Fig/threshold.pdf}
%     \caption{Experimental comparisons on different thresholds in counterfactual explanation generation in our proposed DualCF. We select different thresholds \{0.6, 0.7, 0.8, 0.9\} as the prediction confidence to generate different CFs and CCFs.}
%     \label{fig:threshold}
% \end{figure*}

\textbf{Model Capacity.} We first study how capacity of the substitute model affects the extraction results. The baseline substitute model is a MLP classifier, and we consider three variants of the baseline architecture: removing 50\% nodes of last layer, adding 50\% nodes to the last layer, and adding one layer. We detail the baseline architectures and their variants on each dataset in Table 2 in Appendix A.3. We train these models on the same query set and counterfactual explanations, and then report the agreement in Fig. \ref{fig:architecture}. We can see the differences between substitute models with  different capacities are inconsiderable, implying that the model capacity does not play a key role in the extraction attack. A similar conclusion can be found in \cite{papernot2017practical, aivodji2020model}. As the architectures of the cloud model are often unknown to adversaries, selecting the best model is challenging. Adversaries can select a substitute model based on prior knowledge, e.g., the state-of-the-art models for the same or similar task.

\textbf{Threshold.} Counterfactual explanation methods search an explanation above a probability threshold of target class. Here we empirically investigate the influence of different thresholds. We set the threshold from 0.6 to 0.9 by 0.1 and generate different counterfactual explanations for each threshold with the same method. Then, we train a substitute model with DualCF on CFs and CCFs for each threshold respectively. The experiment results are shown in Fig.~\ref{fig:threshold}. The agreement differences between different thresholds are slight on the five datasets. This is consistent to our discussion in Section \ref{subsec:over_confidence}. In the training stage, the cloud model tends to assign high probability to the training points to minimize the training loss. With a higher probability threshold, CF and CCF are still close to the decision boundary of cloud model. 

% Even though CFs and CCFs from lower threshold come with the tighter space for inferring the decision boundary, however, we find the difficulty of training the substitute model is enhanced accordingly because the differences between CFs and CCFs become smaller.

\textbf{The Distance Metric $d(\cdot)$.} A proper distance metric is essential for finding meaningful proximal instances. Here, we investigate the efficacy of the following three commonly used distance metrics, i.e., 1) $L1$: $\sum_{i=1}^d (\boldsymbol{x}_i - \boldsymbol{c}_i)^2$; 2) $L2$: $\sum_{i=1}^d |\boldsymbol{x}_i - \boldsymbol{c}_i|$; and 3) L1/MAD: $\sum_{i=1}^d \frac{|\boldsymbol{x}_i - \boldsymbol{c}_i|}{MAD_i}$ with $MAD_i$ is the  median absolute deviation of feature $i$ over the training set. 
% = median_{j\in P}(|\boldsymbol{x}_{j,i} - median_{l \in P}(\boldsymbol{x}_{l,i})|)$ for a set of points $P$ where $\boldsymbol{x}_{j,i}$ represents feature $i$ of instance $j$. 

As shown in Fig.~\ref{fig:distance}, our proposed model is consistently robust to the choice of distance metric on synthetic datasets. On the rest datasets, however, such robustness does not stand when the number of queries is small, i.e., L2 distance metric outperforms $L1$ and $L1/MAD$ when the queries are limited. We conjecture that it is due to the sparsity nature of $L1$-based metrics, where such sparse explanations do not fit the high-dimensional input space.  

% On high-dimensional datasets, with $L1$ and $L1/MAD$ distances, it tends to produce sparser explanations. As such, our method will achieve better results on counterfactual explanations produced by the method with $L2$ distance when query size is smaller. However, when the query size is increasing, the difference is no longer clear. 
% \begin{figure*}
%     \centering
%     \includegraphics[width = \linewidth]{Fig/distance.pdf}
%     % \caption{Comparisons on different distance metrics in our proposed DualCF. We generate CFs and CCFs using $L2$, $L1$ and $L1/MAD$ distance and train the substitute models on them. Because all counterfactual explanations locate in the close-to-boundary region regardless the distance selected, our method can extract a model with higher agreement.}
%     \caption{Experimental comparisons on different distance metrics in our proposed DualCF.}
%     \label{fig:distance}
% \end{figure*}

\textbf{Imbalanced Dataset.} This experiment was conducted to illustrate that performance of attack methods based on counterfactual explanations are more stable on the imbalanced dataset. In this experiment, we choose an imbalanced GMSC query dataset where the ``Good'' applicants are $5$ times more than ``Bad'' applicants. The evaluation set is kept unchanged. This is the general case in some applications like fraud detection, healthcare. We report  experiment results on imbalanced query set in Fig. \ref{fig:imbalanced}. Compared with the results on balanced datasets in column 3 in Fig.~ \ref{fig:exp_results}, we can see the variance of attack methods based on cloud model's prediction are larger. However,  attack methods for counterfactual explanations are more stable due to the balanced classes during the substitute model training.

% \begin{figure}
%     \centering
%     \includegraphics[width = 0.5\linewidth]{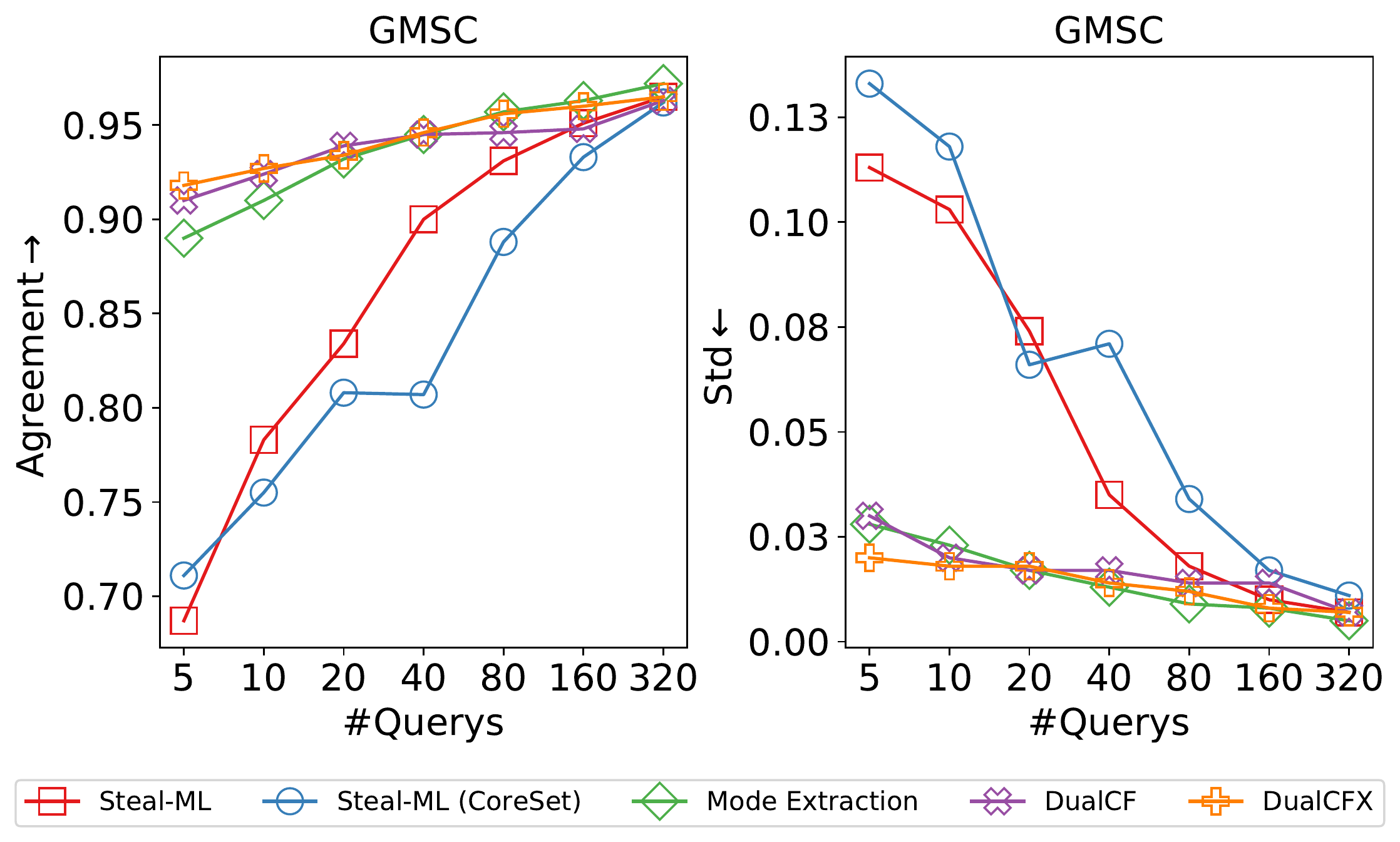}
%     \caption{Experiment results on imbalanced GMSC query set.}
    
%     \centering
%     \includegraphics[width = 0.5\linewidth]{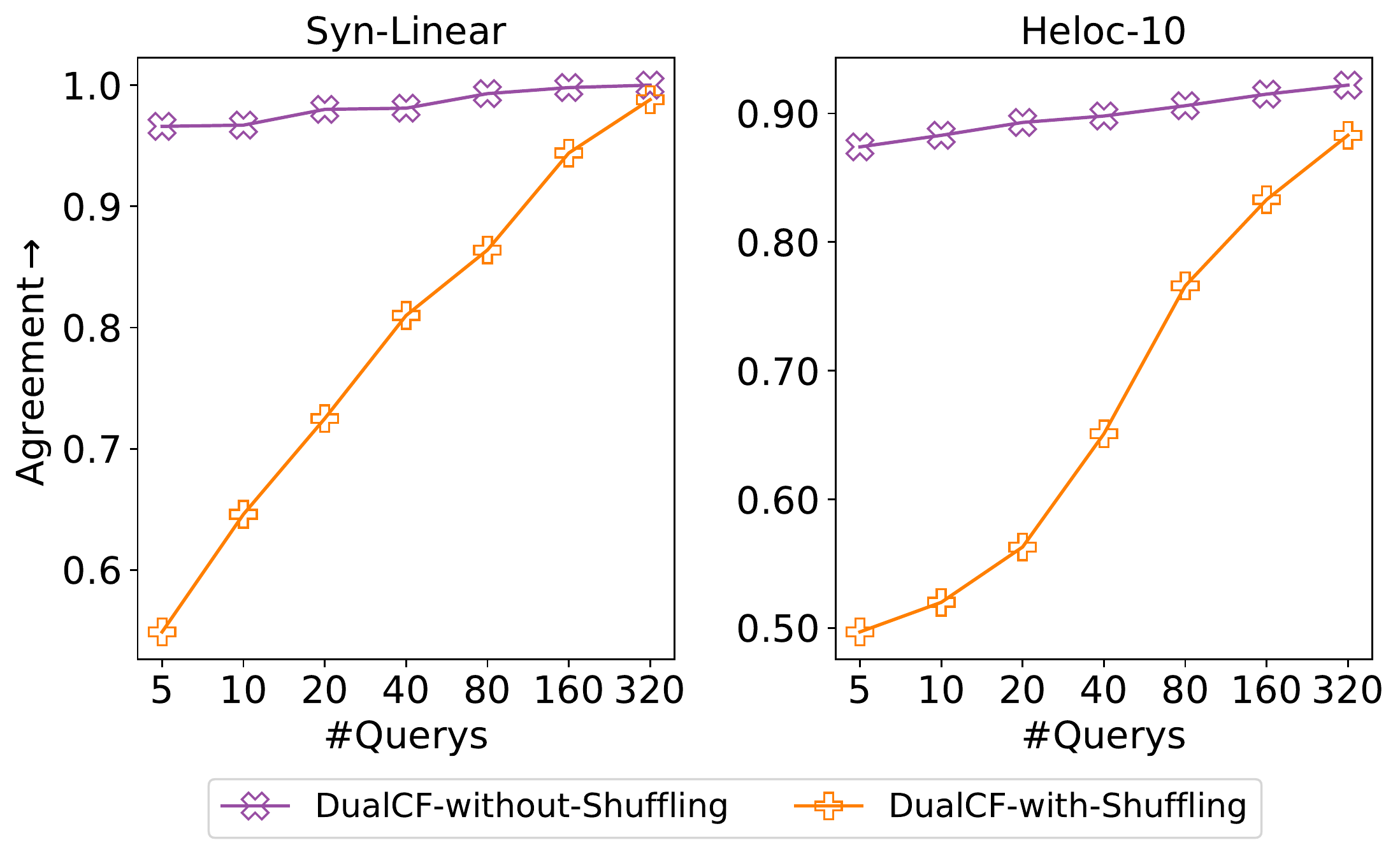}
%     \label{fig:imbalanced}
% \end{figure}

\begin{figure*}
\centering
\begin{minipage}[t]{.5\linewidth}
  \includegraphics[width=\linewidth]{Fig/imbalanced.pdf}
  \captionof{figure}{Experiment results on imbalanced GMSC query set.}
  \label{fig:imbalanced}
\end{minipage}%
\begin{minipage}[t]{.5\linewidth}
  \includegraphics[width=\linewidth]{Fig/shuffling.pdf}
  \captionof{figure}{Experiment results on with/without shuffling for DualCF.}
  \label{fig:shuffling}
\end{minipage}
\end{figure*}

\textbf{Shuffling CFs and CCFs.} We claim that CF and its CCF should be in the same batch during training. Fig.~\ref{fig:shuffling} reports the experiment difference between with/without shuffling CFs and CCFs. Experiment results reveal that training with pairs of CFs and CCFs in the same batch achieves better results. The shuffle operation will destroy the propose of our method that we should train on pairs of instances with similar distance to distance boundary of cloud model and with different predictions.

\section{Conclusion}

Counterfactual explanation method searches the minimum perturbations for an input to achieve a particular different prediction, which has broad applications. However, the boundary information leaked from counterfactual explanations is prone to model extraction attacks. In this paper, we propose a simple yet efficient method DualCF that mitigates the decision boundary shift issue in existing methods. Extensive experiments demonstrate that our method can achieve a high-fidelity model with much fewer queries. Our work raises the awareness of the privacy and security issues of counterfactual explanations, and further motivates the countermeasures (e.g., monitoring malicious queries, adding superfluous features on model training and counterfactual explanations for misleading adversaries, restrict one-way generation of counterfactual explanations.) to protect the cloud model. In the future, we will work on more general attacks and defenses on multi-class classification models and regression models, and explore the influence of different practical constraints on the security of counterfactual explanations. 

\begin{acks}
This research is supported by the National Research Foundation, Prime Minister’s Office, Singapore under its NRF Investigatorship Programme (NRFI Award No. NRF-NRFI05-2019-0002). Any opinions, findings and conclusions or recommendations expressed in this material are those of the author(s) and do not reflect the views of National Research Foundation, Singapore.
This research is supported, in part, by Alibaba Group through Alibaba Innovative Research (AIR) Program and Alibaba-NTU Singapore Joint Research Institute (JRI), Nanyang Technological University, Singapore.  H.Qian thanks the support from the Wallenberg-NTU Presidential Postdoctoral Fellowship.
\end{acks}

\bibliographystyle{ACM-Reference-Format}
\bibliography{sample-base}

\clearpage
\begin{center}\large\bfseries
Appendix for ``DualCF: Efficient Model Extraction Attack from Counterfactual Explanations''
\end{center}
\appendix

\section{Experiments details}

\subsection{Counterfactual Explanations Visualization on Synthetic Datasets}
\label{sec:visualization}
\begin{figure*}[htbp]
    \centering
    \includegraphics[width = 0.45\linewidth]{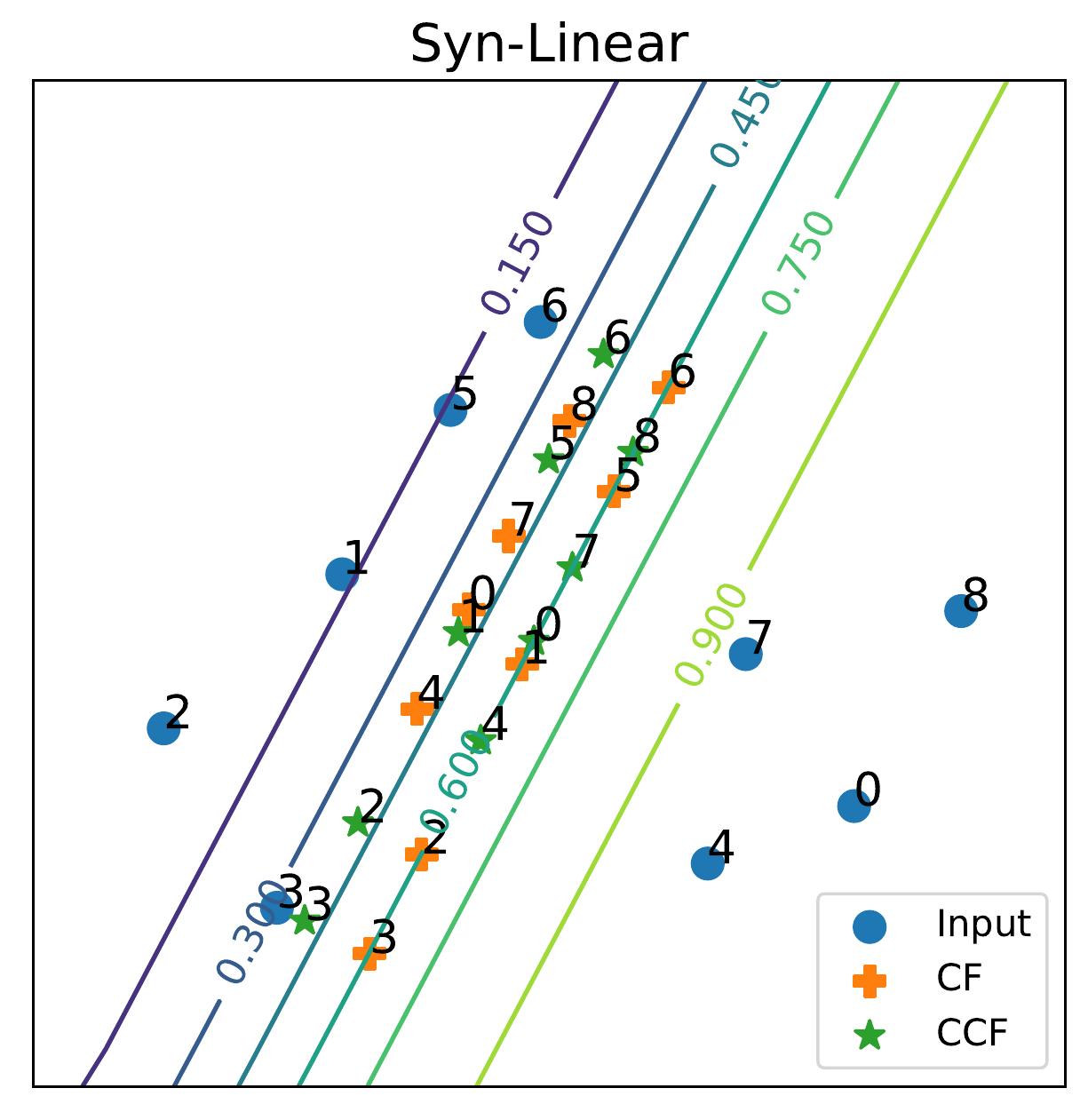}
    \includegraphics[width = 0.45\linewidth]{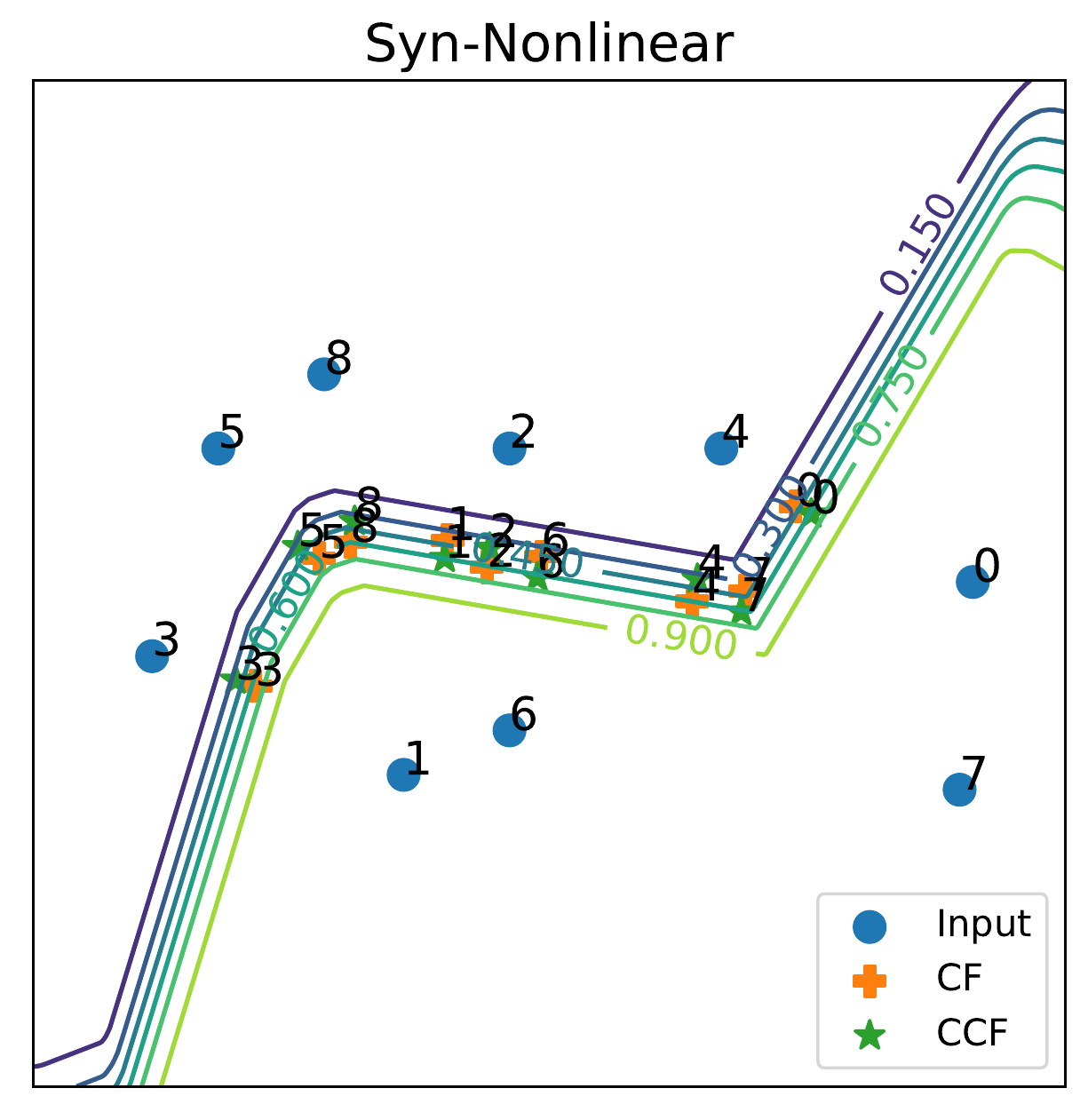}
    \caption{Counterfactual explanations visualization on Syn-Linear and Syn-Nonlinear datasets.}
    \label{fig:cf_vis}
\end{figure*}

We visualize counterfactual explanations of sampled queries on two synthetic datasets in the Fig.  ~\ref{fig:cf_vis}. From the figure, we can see all counterfactual explanations locate in the close-to-boundary regions. The CCFs and queries have the same prediction from the cloud model. CCFs and CFs have similar distance to the decision boundary of cloud model and have opposite predictions. 

\subsection{Training Details of Substitute Model}
\label{sec:details}

We report the training details of substitute model on five datasets in Table ~\ref{tab:details}. The substitute models are MLP. We adopt the Adam optimizer to minimize the binary cross entropy loss for substitute model. The learning rate and training epoches are tuned by the curve of training loss. 

\begin{table*}[h]
\caption{The model architectures, hyper-parameters used for training substitute models on synthetic and real-life datasets.}
\label{tab:details}
\resizebox{\textwidth}{!}{%
\begin{tabular}{|l|l|l|l|l|l|}
\hline
Datasets & Syn-Linear & Syn-Nonliear & GMSC & Heloc-10 & Boston-Housing \\ \hline
Architecture & \begin{tabular}[c]{@{}l@{}}Linear(2, 10), ReLU, \\ Linear(10, 1), Sigmoid\end{tabular} & \begin{tabular}[c]{@{}l@{}}Linear(2, 20), ReLU,\\ Linear(20, 10), ReLU,\\ Linear(10, 1), Sigmoid\end{tabular} & \begin{tabular}[c]{@{}l@{}}Linear(12, 20), ReLU,\\ Linear(20, 10), ReLU,\\ Linear(10, 1), Sigmoid\end{tabular} & \begin{tabular}[c]{@{}l@{}}Linear(10, 20), ReLU,\\ Linear(20, 10), ReLU,\\ Linear(10, 1), Sigmoid\end{tabular} & \begin{tabular}[c]{@{}l@{}}Linear(10, 20), ReLU,\\ Linear(20, 10), ReLU,\\ Linear(10, 1), Sigmoid\end{tabular} \\ \hline
Optimizer & Adam & Adam & Adam & Adam & Adam \\ \hline
Loss function & \begin{tabular}[c]{@{}l@{}}Binary cross entropy \\ (BCE)\end{tabular} & BCE & BCE & BCE & BCE \\ \hline
Learning Rate & 0.005 & 0.005 & 0.01 & 0.01 & 0.005 \\ \hline
Batch Size & 32 & 32 & 32 & 32 & 32 \\ \hline
Epoch & 200 & 500 & 200 & 200 & 200 \\ \hline
\end{tabular}%
}
\end{table*}

\subsection{Details of Model Capacity Experiments}
\label{sec:capacities}

Here, we list the models used in model capacity studies in Table~\ref{tab:capacities}. We train the baseline architectures and their three variants on the same experiment settings. ``Remove Nodes'' and ``Add Nodes'' represent that we remove $50\%$ nodes of the last layer and add $50\%$ nodes on the last layer. ``Add Layer'' means we add one layer. 

\begin{table*}[htbp]
\caption{The models used in model capacity studies.}
\label{tab:capacities}
\resizebox{\textwidth}{!}{%
\begin{tabular}{|l|l|l|l|l|l|}
\hline
Datasets & Syn-Linear & Syn-Nonliear & GMSC & Heloc-10 & Boston-Housing \\ \hline
\begin{tabular}[c]{@{}l@{}}Base\\ Architectures\end{tabular} & \begin{tabular}[c]{@{}l@{}}Linear(2, 10), ReLU, \\ Linear(10, 1), Sigmoid\end{tabular} & \begin{tabular}[c]{@{}l@{}}Linear(2, 20), ReLU,\\ Linear(20, 10), ReLU,\\ Linear(10, 1), Sigmoid\end{tabular} & \begin{tabular}[c]{@{}l@{}}Linear(12, 20), ReLU,\\ Linear(20, 10), ReLU,\\ Linear(10, 1), Sigmoid\end{tabular} & \begin{tabular}[c]{@{}l@{}}Linear(10, 20), ReLU,\\ Linear(20, 10), ReLU,\\ Linear(10, 1), Sigmoid\end{tabular} & \begin{tabular}[c]{@{}l@{}}Linear(10, 20), ReLU,\\ Linear(20, 10), ReLU,\\ Linear(10, 1), Sigmoid\end{tabular} \\ \hline
Remove Nodes & \begin{tabular}[c]{@{}l@{}}Linear(2, 5), ReLU, \\ Linear(5, 1), Sigmoid\end{tabular} & \begin{tabular}[c]{@{}l@{}}Linear(2, 20), ReLU,\\ Linear(20, 5), ReLU,\\ Linear(5, 1), Sigmoid\end{tabular} & \begin{tabular}[c]{@{}l@{}}Linear(12, 20), ReLU,\\ Linear(20, 5), ReLU,\\ Linear(5, 1), Sigmoid\end{tabular} & \begin{tabular}[c]{@{}l@{}}Linear(10, 20), ReLU,\\ Linear(20, 5), ReLU,\\ Linear(5, 1), Sigmoid\end{tabular} & \begin{tabular}[c]{@{}l@{}}Linear(10, 20), ReLU,\\ Linear(20, 5), ReLU,\\ Linear(5, 1), Sigmoid\end{tabular} \\ \hline
Add Nodes & \begin{tabular}[c]{@{}l@{}}Linear(2, 15), ReLU, \\ Linear(15, 1), Sigmoid\end{tabular} & \begin{tabular}[c]{@{}l@{}}Linear(2, 20), ReLU,\\ Linear(20, 15), ReLU,\\ Linear(15, 1), Sigmoid\end{tabular} & \begin{tabular}[c]{@{}l@{}}Linear(12, 20), ReLU,\\ Linear(20, 15), ReLU,\\ Linear(15, 1), Sigmoid\end{tabular} & \begin{tabular}[c]{@{}l@{}}Linear(10, 20), ReLU,\\ Linear(20, 15), ReLU,\\ Linear(15, 1), Sigmoid\end{tabular} & \begin{tabular}[c]{@{}l@{}}Linear(10, 20), ReLU,\\ Linear(20, 15), ReLU,\\ Linear(15, 1), Sigmoid\end{tabular} \\ \hline
Add Layer & \begin{tabular}[c]{@{}l@{}}Linear(2, 10), ReLU, \\ Linear(10, 10), ReLU,\\ Linear(10, 1), Sigmoid\end{tabular} & \begin{tabular}[c]{@{}l@{}}Linear(2, 20), ReLU,\\ Linear(20, 10), ReLU,\\ Linear(10, 10), ReLU,\\ Linear(10, 1), Sigmoid\end{tabular} & \begin{tabular}[c]{@{}l@{}}Linear(12, 20), ReLU,\\ Linear(20, 10), ReLU,\\ Linear(10, 10), ReLU,\\ Linear(10, 1), Sigmoid\end{tabular} & \begin{tabular}[c]{@{}l@{}}Linear(10, 20), ReLU,\\ Linear(20, 10), ReLU,\\ Linear(10, 10), ReLU,\\ Linear(10, 1), Sigmoid\end{tabular} & \begin{tabular}[c]{@{}l@{}}Linear(10, 20), ReLU,\\ Linear(20, 10), ReLU,\\ Linear(10, 10), ReLU,\\ Linear(10, 1), Sigmoid\end{tabular} \\ \hline
\end{tabular}%
}
\end{table*}

\end{document}